\newtheorem{theorem}{Theorem}[section]
\newenvironment{proof}[1][Proof]{\begin{trivlist}
\item[\hskip \labelsep {\bfseries #1}]}{\end{trivlist}}
\title{Improved Optimal and Approximate Power Graph Compression\\
for Clearer Visualisation of Dense Graphs}
\author{Tim Dwyer\thanks{e-mail: tim.dwyer@monash.edu}%
\and Christopher Mears\thanks{e-mail:chris.mears@monash.edu}%
\and Kerri Morgan\thanks{e-mail:kerri.morgan@monash.edu}%
\and Todd Niven\thanks{e-mail:todd.niven@monash.edu}%
\and Kim Marriott\thanks{e-mail:kim.marriott@monash.edu}%
\and Mark Wallace\thanks{e-mail:mark.wallace@monash.edu}
}
\affiliation{\scriptsize Monash University, Australia}
\abstract{Drawings of highly connected (dense) graphs can be very difficult to read. Power Graph Analysis offers an alternate way to draw a graph in which sets of nodes with common neighbours are shown grouped into \emph{modules}.  An edge connected to the module then implies a connection to each member of the module.  Thus, the entire graph may be represented with much less clutter and without loss of detail.  A recent experimental study has shown that such lossless compression of dense graphs makes it easier to follow paths.  However, computing optimal power graphs is difficult.  In this paper, we show that computing the optimal power-graph with only one module is NP-hard and therefore likely NP-hard in the general case. We give an ILP model for power graph computation and discuss why ILP and CP techniques are poorly suited to the problem.  Instead, we are able to find optimal solutions much more quickly using a custom search method.  We also show how to restrict this type of search to allow only limited back-tracking to provide a heuristic that has better speed and better results than previously known heuristics. 
} 
\begin{document}


\firstsection{Introduction}

\maketitle

In real-world applications such as biology and software engineering it is common to find network structures that are too dense to visualise in a way that individual links can still be followed.  Such graphs occur frequently in nature as power-law or small-world networks.
In practice, very dense graphs are often visualised in a way that focuses less on high-fidelity readability of edges and more on highlighting highly-connected nodes or clusters of nodes through techniques such as force-directed layout or abstraction determined by community detection~\cite{newman2006modularity}.  Dense edge clutter may be alleviated to better show node labels by rendering the edges very faintly or with aggregate techniques like bundling~\cite{Holten2009Bundling}.  Though such approaches may give a rough indication of the general graph structure they make following precise edge paths difficult or impossible.  

Such path following is even more difficult when graphs are directed.  Distinguishing direction on edges allows for up to twice as many distinct edges in the graph.  For people trying to understand the graphs to accurately follow directed paths, edge curves must be drawn in enough isolation that any indications of direction (such as tapering, arrowheads or gradients~\cite{holten2011extended}) are clearly visible.


Recently, alternative approaches have been suggested that attempt to retain the fidelity of individual edge paths by introducing drawing conventions that allow a large number of actual edges to be precisely implied by a small set of composite edges.  In particular, so-called 
Power Graph Analysis constructs a hierarchy over nodes, such that nodes with similar neighbour sets are placed in the same group or \emph{module}.  An edge connected to the module then implies
a connection to each member of the module. Power Graph Analysis utilises \emph{lossless compression} since---by contrast with bundling or community-based clustering---no information is lost in the rendering.  That is, the technique can be said to be \emph{information faithful}~\cite{nguyen2013faithfulness} such that the full graph can be reconstructed by careful inspection of the drawing.  Figure \ref{fig:codedependencies} gives a small example of the application of Power Graph Analysis techniques to a small software dependency graph.

Power Graph Analysis has been shown to have practical application to visualising biological networks~\cite{royer2008unraveling}, detecting communities in social and biological networks~\cite{Varlamis2012MiningPotential} and more recently in software dependency diagrams~\cite{dwyer2013EdgeCompression}. A recent user study has shown that---for path-finding tasks---power-graph style groupings were more readable than flat graphs, even for people with very little training~\cite{dwyer2013EdgeCompression}. 

Relatively little attention has been devoted to algorithms for finding power graph decompositions, i.e. the best choice of modules in the power graph. Royer \emph{et al.}~\cite{royer2008unraveling} give a heuristic for finding the decomposition and Dwyer \emph{et al.}~\cite{dwyer2013EdgeCompression} give a constraint programming formulation for finding the optimal decomposition with respect to various criteria, such as fewer edges or fewer edges crossing group boundaries. Unfortunately, an empirical evaluation of the two algorithms given in~\cite{dwyer2013EdgeCompression}  reveals that the heuristic of ~\cite{royer2008unraveling} is not very effective at finding an optimal solution while the constraint programming approach is too slow for practical use, taking days to run for larger graphs. 
Thus, we need better methods to find power graph decompositions.  This is the subject of this paper.  In particular, we focus on finding power-graphs that are optimal or approximately optimal with respect to the number of edges in the decomposition.

In Section \ref{sec:complexity} we prove that the problem of minimising the number of edges in a power graph decomposition with only one group is NP-hard.  This is strong evidence that the general problem is NP-hard, though specific proofs for unrestricted modules and different goal criteria are needed.

In Section \ref{sec:declarative} we look at declarative models of the Power Graph problem for input into general-purpose solvers.  In particular, we offer some refinements to the Constraint Programming model introduced in \cite{dwyer2013EdgeCompression} (Section \ref{sec:cp}) and propose a new Integer Linear Programming model (Section \ref{sec:ilp}).

In Section \ref{sec:search} we explore explicit search methods.  In Section \ref{sec:beamsearch} we introduce a beam-search based approximate method that produces power graphs for a given flat graph, that are much closer to optimal than previous greedy heuristics.  In Section \ref{sec:optimalsearch} we extend this method to a full-backtracking search strategy that is able to find optimal power graphs relatively efficiently using a lower-bound calculation to cut branches of the search that are not useful.  Our experiments (Section \ref{sec:experiments}) show that this approach is orders of magnitude faster than solving the declarative models using generic solvers.

\section{Background and Definitions}
\label{sec:background}
\begin{figure}
\centering
\subfigure[A small graph with 23 edges.]{
	\includegraphics[width=0.7\linewidth]{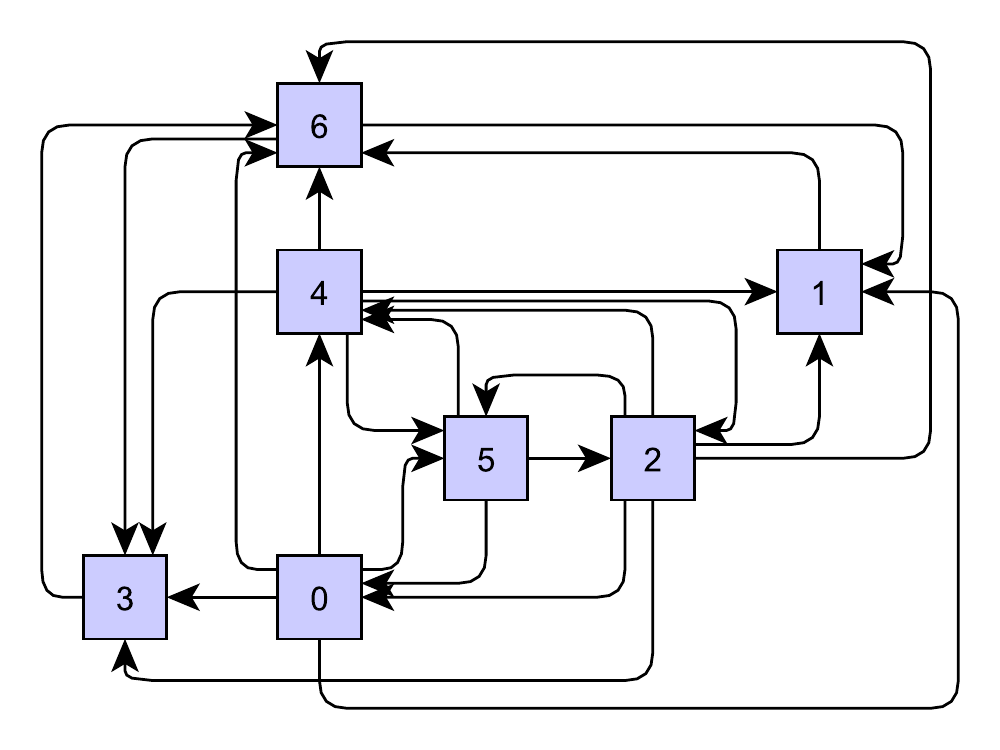}
}
\subfigure[A power graph decomposition of the same graph can be drawn
with only 7 edges.]{
  \label{fig:sevennodes:b}
	\includegraphics[width=0.7\linewidth]{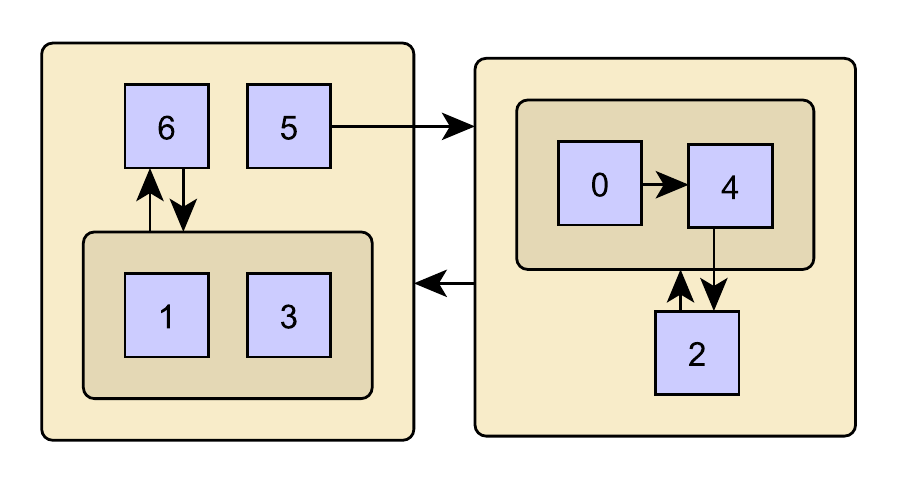}
}
\caption{A small example of power-graph decomposition.
\label{fig:sevennodes}}
\end{figure}

Royer \emph{et al.}~\cite{royer2008unraveling} coined the term Power
Graph Analysis to describe a technique for reducing edge clutter by
introducing a module hierarchy over an undirected graph, such that
edges connected to modules imply a connection to every child of the
module.  More formally, for a directed graph $G=(V,E)$ a power graph
configuration is a set of modules $M$ where each module $m\in M$ is a
subset of $V$.  We assume that $M$ includes the set of trivial modules
$\{\{v\} \mid v\in V\}$.
Our use of the term \emph{module} is due to the similarity with
modules considered in \emph{Modular Decomposition}.  
Like Modular Decomposition, modules can be nested to form a
hierarchy and if two modules overlap one must be
fully contained in the other.
That is, for $m,n \in M$, if $m\bigcap n \neq \emptyset$ then $n
\subset m$ or $m \subset n$.
As in Modular Decomposition,
a power graph decomposition is drawn with a set of
\emph{representative edges} $R$ uniquely specified by $E$ and $M$,
such that a representative edge between two modules represents a
complete bipartite graph in $E$ over the children of the two modules.
That is, for two modules $m$ and $n$, a representative edge $(m,n)\in
R$ represents the set of edges $\mathit{rep}(m,n) = \{(u,v) \mid u\in m, v \in n\}$.
  In drawings of power graphs, the set of
representative edges $R$ must be minimal, i.e. $\nexists e_1, e_2 \in
R$ s.t.\ $\mathit{rep}(e_1) \subset \mathit{rep}(e_2)$. 

Unlike Modular Decomposition, however, the power graph definition
above allows a representative edge to span module boundaries.  For
example, in Figure \ref{fig:sevennodes:b}, the edge from node 5 to module $\{0,2,4\}$ implies that $E$ contains edges $(5,0), (5,2), (5,4)$ but says nothing about any modules containing node 5.  This permits greater compression than modular decomposition,
but can also cause confusion when reading the diagram, as considered
by Dwyer \emph{et al.}~\cite{dwyer2013EdgeCompression}.

Royer \emph{et al.} introduced a simple heuristic for obtaining the
power graph decomposition for a given graph in $O(|V|^3|E|\log|E|)$ time.  
Their procedure involved first computing a
``candidate module hierarchy'' over the nodes in the graph by repeated
greedy matching of nodes or modules with similar \emph{Jaccard index}
between their neighbour sets.  A second iterative phase involves
greedily instantiating the module that reduces the most edges.  This
is repeated until no more edge reduction is possible.

This algorithm is reasonably fast but---as we show in Section
\ref{sec:experiments}---produces results that are far from optimal.  The
Beam Search method described in Section \ref{sec:beamsearch} has faster run-time 
(depending on a beam size parameter) and finds solutions that are very close
to optimal.  Furthermore, in Section \ref{sec:optimalsearch}, we show
that it is easy to introduce backtracking into this explicit search
method, to produce a search that returns an optimal solution.  Such a
complete search is orders of magnitude faster than the best we have
been able to achieve with generic solvers applied to declarative
models, despite extensive experimentation with redundant constraints
to limit the search space in those models, as described in Section
\ref{sec:declarative}.

 
\section{Complexity Analysis}
\label{sec:complexity}
In this section we show that computing the optimal power-graph with a
single module is NP-complete for undirected bipartite graphs. It then
follows that the result holds for directed graphs in general. As an
optimal module for a single module case need not be a module in the
optimal power-graph with multiple modules (see
Figure~\ref{fig:2-modules}), it seems likely that it is hard in the
general case too.  Finding an optimal single module is equivalent to
finding a biclique subgraph $(A,B)$ (here $A$ and $B$ denote the two
independent sets of nodes) that maximises the \textit{edge
  savings}, that is, $|A||B|-|A|$, $|A|\leq |B|$.  The largest part,
$B$, is taken to be the single module.  This problem is different to
the Maximum Vertex Biclique Problem (MVBP) that finds an induced
biclique that maximises the number of nodes, $|A|+|B|$, and to the
Maximum Edge Biclique Problem (MBP) that finds a biclique subgraph
that maximises the number of edges, $|A||B|$.  For example, a biclique
with $|A_1|=3$ and $|B_1|=4$ has more edges than a biclique with
$|A_2|=1$ and $|B_2|=11$, but $B_2$ gives the maximum edge savings.
Similarly, a biclique with $|A_1|=1$ and $|B_1|=9$ has more nodes
than a biclique with $|A_2|=3$ and $|B_2|=6$, but $B_2$ give the
maximum edge savings.  Although MVBP can be solved in polynomial time
for bipartite graphs \cite[Comments on
GT24]{Garey1979}\cite{yannakakis1981}, it is NP-complete for general
graphs \cite{yannakakis1978}.  MBP was shown to be NP-Complete in
\cite{peeters2003Bicliques} using a reduction based on the reduction
used to show the NP-completeness of the Balanced Complete Bipartite
Subgraph problem in \cite{Johnson1987}.  We use this reduction to show
that finding the optimal single module is NP-complete.

\begin{figure}
\centering
\includegraphics[width=\linewidth]{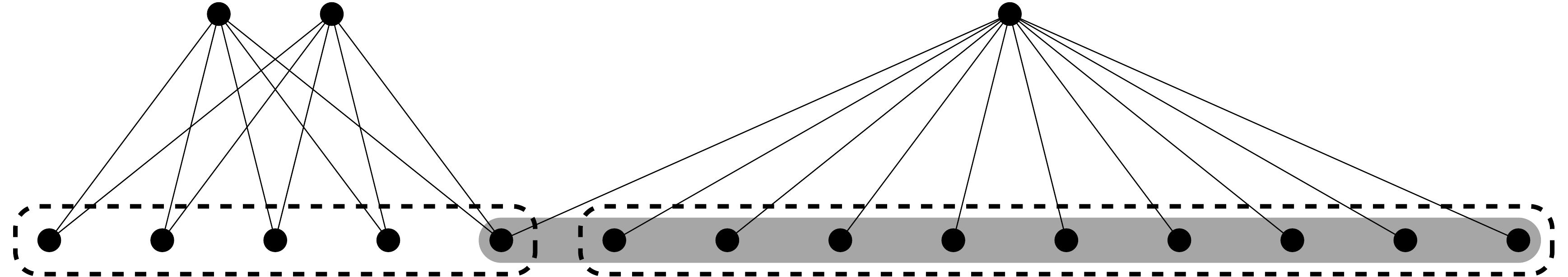}
\caption{The optimal solution for a single module (shaded) is not in
  the optimal solution for two modules (dashed lines). 
\label{fig:2-modules}
}
\end{figure}

\begin{figure}
\centering
\includegraphics[width=\linewidth]{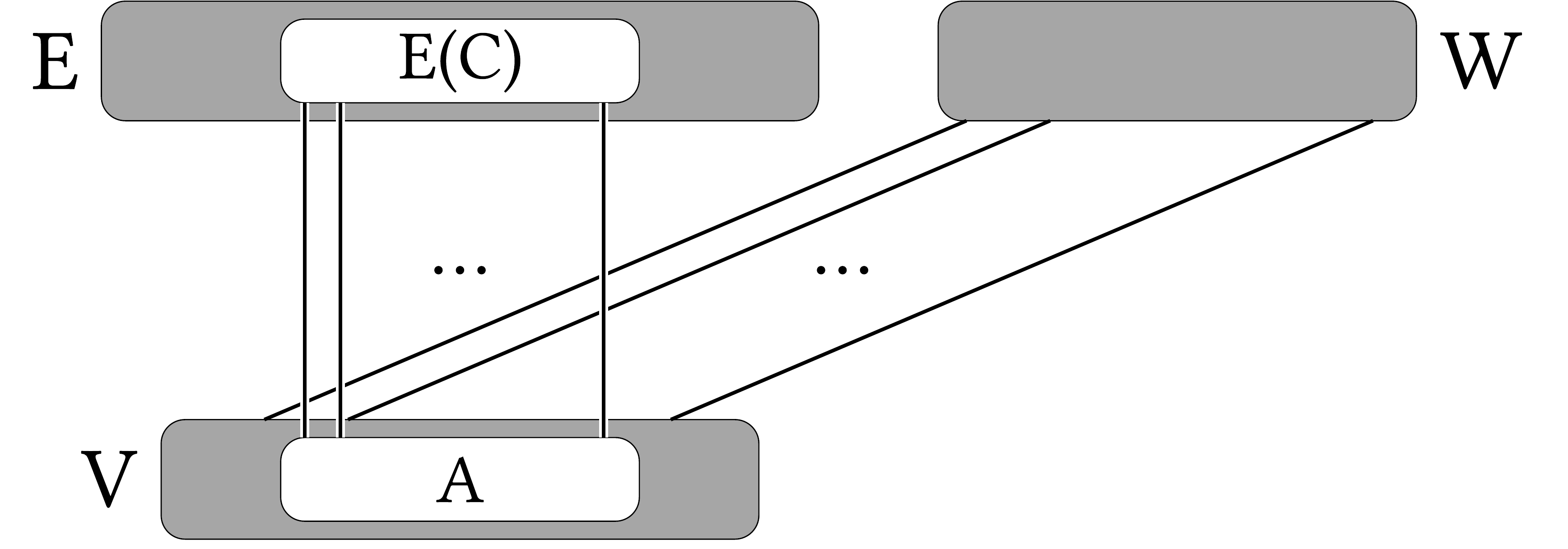}
\caption{The construction $G'$. 
\label{fig:G'}
}
\end{figure}

We define the problem MSMP (Maximum Single Module Problem) as follows:
Given a bipartite graph $G=(V_{1}, V_{2}, E)$ and $k\in \mathbb{N}$,
does $G$ contain a module that achieves an edge saving $\geq k$.
\begin{theorem}
MSMP is NP-complete.
\end{theorem}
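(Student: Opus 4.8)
The plan is first to note membership in NP and then to get NP-hardness by recycling the construction behind the NP-completeness of the Maximum Edge Biclique Problem (MBP). Membership is routine: a witness is a module $B$ together with the independent set $A$ on the opposite side of the bipartition; one checks in polynomial time that every vertex of $A$ is adjacent to every vertex of $B$, takes the larger of the two sets as the module, computes the edge saving $|A|\,|B|-\min(|A|,|B|)$, and compares it with $k$. The reformulation to carry into the hardness argument is the one already stated above: an optimal single module corresponds to a biclique subgraph $(A,B)$ with $|A|\le|B|$ maximising $|A|\,|B|-|A| = |A|(|B|-1)$.

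For hardness I would reduce from 3SAT, reusing the bipartite graph $G'$ of Figure~\ref{fig:G'} --- the construction of Peeters~\cite{peeters2003Bicliques}, itself a variant of Johnson's reduction~\cite{Johnson1987} for Balanced Complete Bipartite Subgraph. From a $3$-CNF formula $\phi$ with $n$ variables and $m$ clauses this produces, in polynomial time, a bipartite graph $G'=(V_1,V_2,E)$ and a parameter $t=t(n,m)$ with: (i) $\phi$ satisfiable $\iff$ $G'$ contains $K_{t,t}$; and (ii) every biclique $(A,B)$ of $G'$ with $|A|(|B|-1)\ge t^2-t$ forces $G'$ to contain a $K_{t,t}$. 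Setting $k:=t^2-t$, a satisfying assignment yields a $K_{t,t}$ and hence a module (either side) with edge saving exactly $k$, while a module with saving $\ge k$ yields, via (ii), a $K_{t,t}$ and hence a satisfying assignment. With membership in NP, this proves the theorem.

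The one point not already contained in~\cite{peeters2003Bicliques} is property (ii): the single-module objective is $|A|(|B|-1)$, not the edge count $|A|\,|B|$, so a lopsided biclique could in principle have fewer edges but a larger saving, and the correspondence with balanced bicliques is no longer automatic. The extreme case is a star $K_{1,s}$, whose saving is $s-1$, so one must check that $G'$ has maximum degree at most $t^2-t$, and more generally that $\max|A|(|B|-1)$ over all bicliques of $G'$ with $|A|<t$ stays strictly below $t^2-t$. This is the part I expect to be the real work: it hinges on the precise parameters of the construction (one wants the relevant degrees to be small relative to $t^2$, so that $t^2-t$ dominates for large $\phi$), and a little padding of $G'$ may be needed to enforce it. An alternative that avoids this bookkeeping, which I would also try, is to reduce directly from MBP: in an instance $H=(V_1,V_2,E)$ replace every vertex of $V_2$ by $c=|V_1|+1$ twins; every biclique's edge count is then multiplied by $c$ while its additive penalty $|A|$ stays below $c$, so for the threshold $k:=c\ell-|V_1|$ one can show that the blown-up graph has a module of saving $\ge k$ exactly when $H$ has a biclique with at least $\ell$ edges, giving NP-hardness at once.
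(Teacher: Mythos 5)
Your primary route, as written, has a genuine gap at exactly the point where the actual work lies. First, a factual correction: the construction of Figure~\ref{fig:G'} (Peeters' gadget, following Johnson) is a reduction from CLIQUE, not from 3SAT --- $V_1$ is the vertex set of the CLIQUE instance and $V_2$ is its edge set plus a padding set $W$ of $\binom{k}{2}$ new vertices adjacent to everything. More importantly, your property~(ii) --- that every biclique $(A,B)$ of $G'$ with saving $|A|(|B|-1)\ge k^3-k^2-k$ forces the existence of a $k$-clique --- is precisely the step you defer (``the part I expect to be the real work''), and it does not follow from anything in \cite{peeters2003Bicliques} because, as you correctly observe, the objective $|A||B|-|A|$ is not monotone in the edge count. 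The paper closes this by noting that any maximal biclique has $W\subseteq B$ and $|A|\le 2k\le|W|$, then splitting on $|A|>k$ versus $|A|\le k$ and, in the second case, bounding $|B'|$ (the non-$W$ part of $B$) via Tur\'an's theorem applied to the clique-free induced subgraph on $V\setminus A$; both cases reduce to explicit polynomial inequalities valid for $k\ge 5$. Without some argument of this kind the ``no'' direction is simply unproved, so the main proposal is incomplete.

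Your fallback reduction from MBP by replacing each $v\in V_2$ with $c=|V_1|+1$ twins is, however, essentially correct and is a genuinely different (and arguably cleaner) route than the paper's: it treats Peeters' theorem as a black box instead of re-running his construction and re-proving the quantitative bound with Tur\'an. The one normalization you should make explicit is that a biclique $(A',B')$ in the blown-up graph may be assumed twin-closed: replacing $B'$ by all $c|B|$ twins of its underlying set $B$ never decreases the saving, since $|B'|(|A'|-1)\le c|B|(|A'|-1)\le c|A'||B|-|A'|$ once $c|B|\ge|A'|$, which holds because $c>|V_1|$. After that, the larger side is always the twin side, the saving is $c|A'||B|-|A'|$, and your threshold $k=c\ell-|V_1|$ gives $c|A'||B|>c(\ell-1)$, hence $|A'||B|\ge\ell$, completing the equivalence. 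The trade-off between the two proofs: the paper's is self-contained down to CLIQUE and yields the explicit bound $k^3-k^2-k$, at the cost of the Tur\'an case analysis; yours is shorter but inherits its hardness from MBP.
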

\begin{proof}
  It is clear that MSMP is in NP.  We reduce from CLIQUE~\cite{Garey1979} to MSMP using
  the transformation given in \cite{peeters2003Bicliques}.  Let
  $G=(V,E)$ and $k\in \mathbb{N}$ be an instance of CLIQUE.  Without
  loss of generality we assume $k=|V|/2\geq 5$.  We construct a
  bipartite graph $G'=(V_{1}, V_{2}, E')$ (see Figure~\ref{fig:G'}) such that $G'$ contains a
  module with edge savings $\geq k^{3}-k^{2}-k$ if and only if $G$
  contains a clique of size $k$.  Let $V_{1}=V$ and $V_{2}=E\cup W$
  where $W$ is a set of $\binom{k}{2}$ new nodes and $E'=\lbrace
  \lbrace v,e\rbrace: v\in V, e\in E \text{ and } v \text{ not
    incident to }e\rbrace \cup \lbrace \lbrace v,w\rbrace: v\in V,
  w\in W\rbrace$.  It is clear that this construction can be performed
  in polynomial time.

  If $G$ contains a clique $C$ of size $k$ then $(A,B)$ where $A=V-C$
  and $B=W \cup E(C)$ is a biclique subgraph
  of $G'$.  For $k\geq 5$, $|B|=2\binom{k}{2}\geq |A|=k$.  Selecting
  $B$ as a module gives an edge savings of $|A||B|-|A|=
  k^{3}-k^{2}-k$.  

  Suppose $G$ has no clique of size $\geq k$.  We show that any module
  in $G'$ gives a saving of $<k'=k^{3}-k^{2}-k$.  Let $(A,B)$ be a
  maximal biclique in $G'$, $A\subseteq V_{1}$ and $B\subseteq V_{2}$.
  We note that any maximal biclique has $W\subseteq B$ and as $|A|\leq
  |V|=2k\leq |W|\leq |B|$ when $k\geq 5$, $B$ would be the optimal module.

  Now $B=W\cup B'$ where $B'$ corresponds with the edges in $G$ whose
  endpoints are not in $A$.  So $|B'|\leq (|V|-|A|)(|V|-|A|-1)/2 =
  (2k-|A|)(2k-|A|-1)/2$.

  There are two cases.
  \paragraph{Case 1 $|A|>k$:} Suppose $|A|=k+c$, $c\in (0,k]$.  Then
  \begin{align*}
    |A||B|-|A| &\leq (k+c)\left( \frac{k(k-1)}{2} +\frac{(k-c)(k-c-1)}{2}\right)-(k+c) \\
    &= k^{3}-k^{2}-k-\frac{k}{2}\left(c^{2}+c\right)+\frac{c}{2}\left(c^{2}+c-2\right)\\
    &<k^{3}-k^{2}-k.
  \end{align*}

  \paragraph{Case 2 $|A|\leq k$:} Suppose $|A|=k-c$, $c\in [0,k]$.
  Now as $G$ has no clique of size $\geq k$, the number of edges in
  the subgraph of $G$ induced by $V-A$ (and hence $|B'|$) is $\leq
  \frac{(k-2)(k+c)^{2}}{2(k-1)}$ by Tur\'{a}n's
  theorem~\cite{Turan1941}.  So
\begin{align*}
  |A||B|-|A| \leq & (k-c)\left( \frac{k(k-1)}{2} + \frac{(k-2)(k+c)^{2}}{2(k-1)}\right)-(k-c)\\
  \leq& k^{3}-k^{2}-k \\
  &- \frac{1}{2(k-1)} \left( (c^{2}+1)k^{2}-2c^{2}k+c(c^{2}-1)(k-2) \right)\\
  < &k^{3}-k^{2}-k
\end{align*}
for $k\geq 5$.  \hfill $\Box$
\end{proof}


\section{Declarative Models}
\label{sec:declarative}
In this section we investigate how two standard generic methods for solving combinatorial optimisation problems, constraint programming and integer liner programming (ILP), can be applied to
solve the optimal power graph decomposition problem. The advantage of using such generic approaches is that they allow the model to be relatively easily specified in a declarative language such as MiniZinc~\footnote{\url{http://minizinc.org/}} and then run with powerful state-of-the-art solvers.

\subsection{Constraint Programming}\label{sec:cp}

Our starting point is the constraint programming formulation for finding the optimal power graph decomposition given in \cite{dwyer2013EdgeCompression}. 
The input to the model is the number of vertices $nv$, a Boolean array $\mathit{edge}$ which is the adjacency matrix for the edges in the original graph, a limit on the number of modules $ml$ and an upper bound $ub$ on the  objective function. To solve the complete problem we can set $ml = nv$ and $ub = \infty$.

The main decision variables in the problem are the number of modules $\mathit{anm}$ and the Boolean array $\mathit{module[v,m]}$ which gives the vertices $v$ in each module $m$, i.e. 
$\mathit{module}[v,m] \leftrightarrow  v \in m$. 
There are 3 kinds of modules. Modules $1...nv$ are trivial. We constrain these to have a single vertex in them, i.e. module $v$ is the trivial module $\{ v \}$.
Modules $nv+1...anm$ are real modules while modules $anm+1..nm$ are dummy modules where  $nm=nv+ml$. The dummy modules are constrained to be empty.

We require that the modules form a hierarchy: this is enforced by requiring for all modules 
$m$ and $n$, $m \subseteq n \vee n \subseteq m \vee m \cap n = \emptyset$. 
The formulation makes use of  Boolean array $\mathit{mcontains}[m,n]$ which is constrained to hold if module $m$ contains module $n$.

Our objective function is to minimise the number of edges in the power graph.
For any fixed choice of modules there is a unique best choice of edges in the power graph.
We compute for each pair of nodes $m$ and $n$ if there is a \emph{possible} edge between them. There is a possible edge between $m$ and $n$ iff: (1)
for all $u \in m$ and for all $v \in n$, $(u,v) \in E$, and (2) $m=n$ or $m \cap n = \emptyset$.
From this we can compute the \emph{actual} edges in the power graph. This is any possible edge $(m,n)$
which is not \emph{dominated} by some other possible edge $(m',n')$ where $(m',n')$ dominates  $(m,n)$ if $m \subseteq m'$ and $n \subseteq n'$.

We extended the model from \cite{dwyer2013EdgeCompression} with a number of redundant and symmetry breaking constraints that significantly improved its efficiency. \begin{enumerate}
\item
To stop arbitrary re-ordering of the modules we added the symmetry breaking constraint that the real and dummy modules were in decreasing lexicographic order using the standard global function 
$\mathit{lex\_greatereq}$.
\item
We added the redundant constraint that if two vertices have the same ingoing and outgoing edges then they should be in exactly the same set of modules: clearly this is true in an optimal solution.
\item
We added a redundant formulation of module containment  based on the observation that the scalar product $sp[m,n]$ of two modules $m$ and $n$ is $|m|$ iff $m \subseteq n$.
%
%
\item
We added a redundant constraint that every module must have at least one potential edge from it: i.e. there is at least one node that has an edge to all of the nodes in the module.
\end{enumerate}

\subsection{Integer Linear Programming (ILP)}
\label{sec:ilp}
In our next approach we explored the use of ILP.
Mathematical programming techniques like ILP can often outperform constraint
programming on particular problems. A disadvantage, when compared to
constraint programming, is it can be difficult to formulate a given
problem as a mathematical program. Here we formulate an integer linear
program that minimizes the number of edges. A more detailed description of the model can be found in the Appendix. \\
\noindent\underline{\textbf{Input and Parameters}}\vspace{1mm}

\noindent $n$  is the number of vertices of the input graph $G$.\\
$V = \{0,1,\dots,n-1\}$ represents the vertices of the input
  graph $G$.\\
$e(u,v)$  represents the edges of the input graph $G$ as an
  incidence matrix. That is, $e(u,v) = 1$ if $(u,v)$ is an edge of $G$
  and $e(u,v)=0$ otherwise.\\
$m$ is the number of modules with at least two elements (we
  consider each singleton vertex to belong to its own module).\\
$M= \{ 0,1,\dots,n+m-1\}$ represents the set of all modules. 

\noindent \underline{\textbf{Integer decision variables}}\vspace{1mm}

\noindent $sav[m_1,m_2]$ the number of edges that may be
removed from $G$ (to then be replaced by a single edge) if the modules
$m_1$ and $m_2$ are added.

\noindent \underline{\textbf{Binary decision variables}}\vspace{1mm}

\noindent $mod[v,m]$ takes the value $1$ if and only if vertex $v$
  belongs to the module $m$.\\
$ind[v,m]$ takes the value $1$ if and only if $(v,u)$ is an
  edge, for all $u$ in the module $m$.\\
$bic[m_1,m_2]$ takes the value $1$ if and only if, for
  every vertex $v\in m_1$ and every vertex $u\in m_2$, the pair
  $(u,v)$ is an edge in $G$.\\
$dis[m_1,m_2]$ takes the value $1$ if and only if the modules $m_1$
  and $m_2$ are disjoint sets of vertices.\\
$sub[m_1,m_2]$ takes the value $1$ if and only if the module
  $m_1$ is a proper subset of the module $m_2$.\\
$mInd[v,m_1,m_2]$ takes value $1$ if and only if $v\in m_1$ and $ind[v,m_2]=1$.\\
$vMod[v,m_1,m_2]$ takes value $1$ if and only if $v\in m_1$ and $v\in m_2$.\\
$sVer[v_1,v_2,m_1,m_2]$ takes value $1$ if and only if
  $(v_1,v_2)$ is an edge with $v_1\in m_1$ and $v_2\in m_2$ and the
  edge $(v_1,v_2)$ can be removed if $m_1$ and $m_2$ are added.\\
$sMod[m_1,m_2]$ takes value $1$ if and only if $sav[m_1,m_2] > 0$.

\noindent \underline{\textbf{Objective}}\vspace{1mm}

\noindent Maximise \[\sum\{sav[m_1,m_2] -
sMod[m_1,m_2] \mid m1,m2\in M,\ m_1\neq m_2\}.\] 

\noindent \underline{\textbf{Constraints}}\vspace{1mm}

\noindent The majority of the constraints in our ILP model are purely to
correctly define binary decision variables and so we have omitted them
for brevity.
\begin{enumerate}
\itemsep0em 
\item $dis[m_1,m_2] + sub[m_1,m_2] + sub[m_2,m_1] = 1$, $m_1\neq m_2\in M$.
\item $\sum\{sVer[v_1,v_2,m_1,m_2] \mid m_1,m_2\in M, m_1\neq m_2 \}\leq 1$,  $v_1,v_2\in~V$.
\item $sav[m_1,m_2]\leq \sum\{sVer[v_1,v_2,m_1,m_2] \mid v_1,v_2\in V, v_1\neq v_2 \}$,  $m_1\neq m_2\in M$.
\item $sMod[m_1,m_2] \leq sav[m_1,m_2]$,  $m_1\neq m_2\in M$.
\item $sav[m_1,m_2] \leq n^2 sMod[m_1,m_2]$, $m_1\neq m_2\in M$.
\item $\sum\limits_{v\in V} mod[v,m_1] =1$,  $m_1\in V$.
\item $mod[m_1,m_1] = 1$,  $m_1\in V$.
\end{enumerate}
Constraint 1 states that for any two distinct modules $m_1$ and $m_2$,
either $m_1$ and $m_2$ are disjoint, or one is a proper subset of the
other. Constraint 2 says that no edge can be counted twice in the
saving calculation. Constraint 3 defines the variables
$sav[m_1,m_2]$. Constraints 4 and 5 defines
$sMod[m_1,m_2]$. Constraints 6 and 7 force each vertex to
be a singleton module.


\section{Explicit Search Methods}
\label{sec:search}


\begin{figure}
\centering
\includegraphics[width=\linewidth]{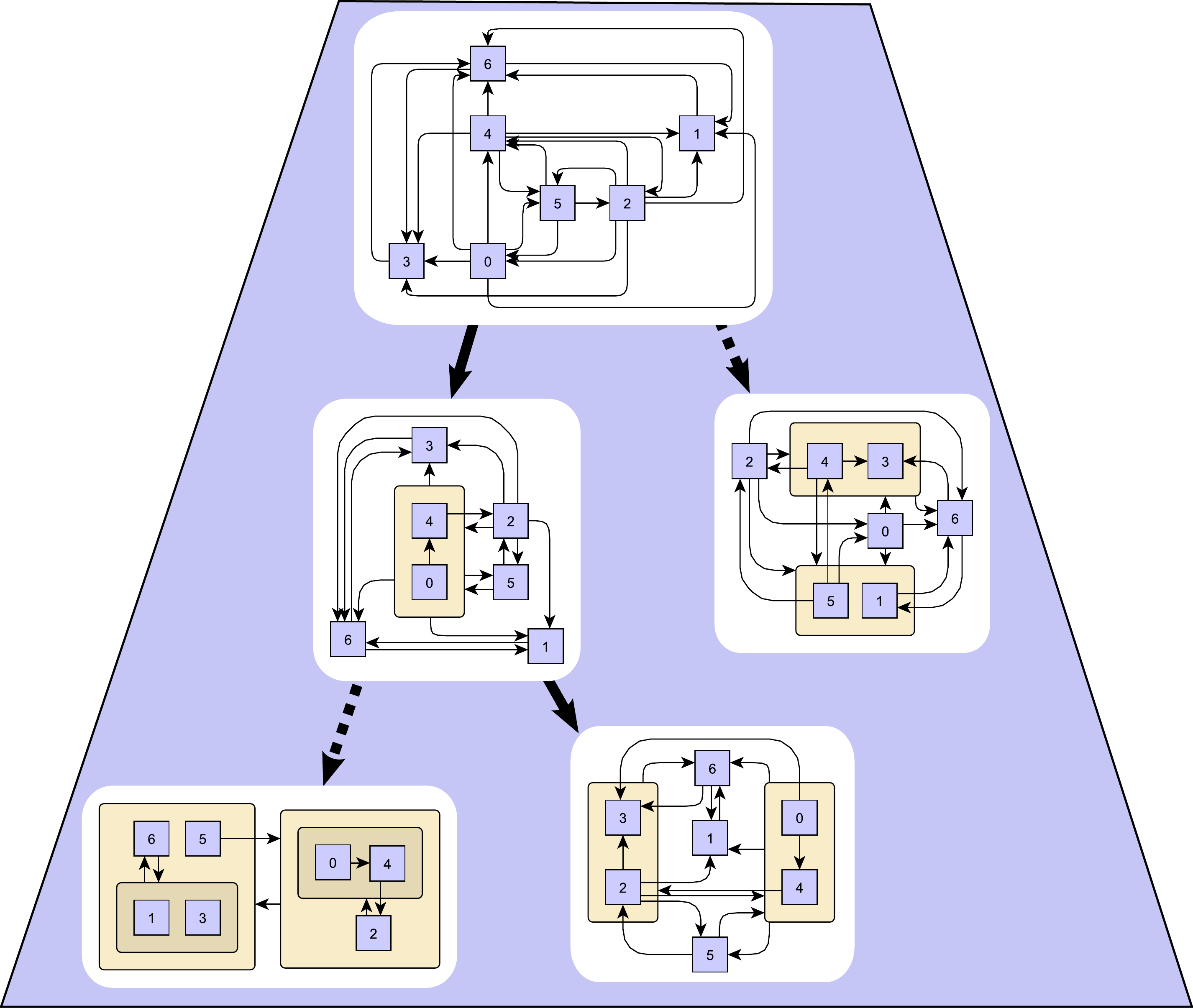}
\caption{The search space of module configurations for a given graph is a tree structure.  Here we highlight several nodes in the very large search space of module configurations for the graph shown in Figure \ref{fig:sevennodes}.  A solid arrow indicates a direct child relationship in the search tree.  That is, the child configuration is obtained from a single merge operation from the parent.  A dotted arrow indicates the target configuration is reachable from the source via multiple merges.
\label{fig:configurationtree}}
\end{figure}

As defined in Section~\ref{sec:background}, for a given flat graph, a \emph{configuration} is a set of modules,
where each module is a set of vertices of the graph and other modules such that the modules
form a hierarchy.  
The objective is to find a configuration that minimises the number of edges.

A configuration can be constructed by adding one module at a time.
The order in which the modules are added has no effect, so we
impose a constraint that only top-level modules are added -- that is,
if the configuration is $\{m_1,m_2,\dots,m_n\}$, then a new module
must not be a subset of any $m_i$.

The set of possible module configurations is then a
tree, with the configuration containing only trivial modules (the flat graph) at the
root, as illustrated in Figure \ref{fig:configurationtree}.  Each node in the tree has one child for each module that can be
added to that configuration.  

To prevent multiple paths to the same
configuration, we impose an arbitrary ordering on the children of a
node, and force that if $m_1$ and $m_2$ are two sibling child modules
such that $m_1$ appears to the left of $m_2$, then the module $m_1$
may not appear in any configuration in the subtree of $m_2$.

Thus, this tree will contain every possible configuration of modules,
and a search for the best configuration may simply do a full traversal
of the tree; however, this is impractically slow.  In the remainder of
this section we describe how to improve the efficiency of this search.

\subsection{Beam Search}
\label{sec:beamsearch}

Using the above definitions, we can define greedy heuristics to obtain
a power-graph configuration with significantly fewer representative
edges than are required to draw the flat graph.

To limit the branching at each search-tree node $t$ associated with a
configuration $C$, we restrict children of $t$ to be those
obtainable through a \emph{merge} of two top-level modules in $C$.
That is, a merge of modules $m$ and $n$ creates a new configuration
$C^\prime = C \cup \{\{v\} \mid v \in m\cup n\}$.

A simple best-first search is---for a given starting configuration
$C$ with $n_t$ top-level modules---to try all possible merges to
give $\{C^\prime_1, \ldots, C^\prime_{n_t (n_t - 1)}\}$ children
and then simply take the one with the fewest representative edges to
be the next configuration.  We repeat until no further improvement is
possible.

Since the set of representative edges $R$ used to draw the power graph
must be minimal, such a merge operation may leave $m$ or $n$ with no
associated representative edges, i.e.\ $\mathit{rep}(m,\_)=\emptyset$
or $\mathit{rep}(\_,n)=\emptyset$.  Since a module with no associated
edge in $R$ serves no purpose, we remove such modules from the merge result
$C^\prime$.

In practice, a full merge is not required just to calculate the reduction in edges achieved by the merge.  Consider a merge of two modules $m$ and $n$ in $C$ with representative edges $R(C)$ giving a configuration $C^\prime$ with edges $R(C^\prime)$.  If the outgoing and incoming neighbour sets in $R$ of $m$ are given by $N^+(m)$ and $N^-(m)$, then the number of edges 
in $R(C^\prime)$ will be precisely: 
$$\mathit{nedges}(m,n) = |R(C)| - |N^+(m)\cap N^+(n)| - |N^-(m)\cap N^-(n)|$$  

Applied to a graph $G=(V,E)$, $O(|V|)$ such greedy merges are possible before
no more improvement is possible.  Since a given module may include all edges
in its neighbour set, computing $\mathit{nedges}(m,n)$ can be $O(|E| \log
|E|)$ and to find the best possible merge this must be computed for all pairs
of modules.  Therefore, a na\"ive implementation could take 
$O(|V|^3 |E| \log |E|)$ time though in practice the number of modules that must be considered and the size of their neighbour sets diminishes quickly.

Beam search \cite[p.~195]{Kaufmann1991Paradigms} gives us a way to introduce some limited backtracking into this best-first strategy.
Instead of only considering the single best merge at each iteration, we maintain a \emph{beam} of the $k$ best solutions found so far.
The full process is shown in Algorithm \ref{alg:beamsearch}.

One detail in this algorithm is the check before insert of a configuration
into the beam that we have not already considered a configuration that is structurally
identical.  This is done by maintaining a hashset $S$ with \emph{signatures} of 
configurations previously inserted into the beam.
For a configuration $C$ the signature $\mathrm{sig}(C)$ is obtained by a canonical (ordered) traversal of the module hierarchy. 

\begin{algorithm}
\SetAlgoLined
\KwData{A starting configuration of modules $C_0$
   with representative edges $R_0$.  A beam size $k\geq 1$.
}
\KwResult{An improved configuration $C^\star$ with $|R^\star|\leq |R_0|$ }
\vspace{2mm}
$\mathsf{beam} \leftarrow$ priority queue of configurations \\
 \Indp\Indp such that first element in queue has maximum $|R|$\;
 \Indm\Indm push$(\mathsf{beam}, C_0)$\;
 $S \leftarrow$ hashset of \emph{signatures} of configurations in $\mathsf{beam}$\;
\Repeat{not $\mathsf{improved}$}{
   $(\mathsf{improved}, \mathsf{currentbeam}) \leftarrow$ (false, copy $\mathsf{beam}$)\;
  \ForEach{$C \in \mathsf{currentbeam}$}{
      $\mathsf{merges}\leftarrow \{(e,m,n) \mid $\\
      \Indp\Indp\Indp $e\leftarrow \mathit{nedges}(m,n), (m,n)\in C\times C\}$\;
      \Indm\Indm\Indm
      $\mathsf{kbest}\leftarrow$ from $\mathsf{merges}$ take $k$ triples $(e,m,n)$ \\
      \Indp\Indp with smallest $e$ where $sig(C, (m, n)) \not\in S$\;
      \Indm\Indm
     \ForEach{$(e,m,n)\in \mathsf{kbest}$}{
       $b \leftarrow$ first (i.e. worst) configuration in $\mathsf{beam}$\; 
       \If{$|\mathsf{beam}|<k$ or $|R(b)|>e$}{
         $C^\prime\leftarrow \mathrm{merge}(m,n)$\;
         push$(\mathsf{beam}, C^\prime)$\;
         insert$(S, \mathrm{sig}(C^\prime))$\;
         $\mathsf{improved} \leftarrow$ true\;
       }
       \If{$|\mathsf{beam}|>k$}{
         pop$(\mathsf{beam})$\;
       }
     }
   }
}
\Return configuration in beam with smallest $|R|$
\vspace{2mm}
\caption{Beam Search
\label{alg:beamsearch}}
\end{algorithm}

\subsection{Optimal Search}
\label{sec:optimalsearch}

We can show certain properties of the configuration tree that will
help us to eliminate regions of the tree without missing any optimal
configurations.  First, we restrict the definition of ``optimal'' to
be a configuration that has no redundant modules.  That is, if two
configurations $C$ and $C \cup \{m\}$ have the same objective cost,
only $C$ can be considered optimal.

\textit{Adding a module cannot increase the number of edges.}  Any new
edge added to the graph must replace some existing edges.  A new edge
between existing module $m$ and new module $n$ replaces all previous
edges between members of $m$ and members of $n$.  Since there must be
at least one such previous edge, the number of edges removed is at
least the number of edges added.

\textit{Every optimal configuration can be reached by a sequence of
  \emph{improving} module additions.}  That is, every module added 
reduces the number of edges.

An outline of the reasoning is as follows.  Let the desired optimal
configuration be $C=\{m_1,m_2,\dots,m_n\}$.  Starting from the empty
configuration, choose any module that contains no sub-modules.  Adding
this module will reduce the number of edges, so add it to the current
configuration.  This module must reduce the number of edges -- if it
didn't, then $C$ wouldn't be optimal.  Then after adding this module,
find another module that contains no sub-modules (or only sub-modules
that have already been added) and add that module, and so on.

\textit{An optimal solution has at most $n-2$ modules, where $n$ is the
  number of vertices in the flat graph.}
A configuration can have at most $n-1$ modules.  A configuration with
$n-1$ modules must include a module that contains all vertices in the
graph.  However, such a module is not an improving module and so
cannot appear in an optimal solution.  Therefore, an optimal
configuration can have at most $n-2$ modules.

\textit{If adding module $m$ to configuration $C$ would remove $e$ edges,
  then adding module $m$ to configuration $C' \supset C$ would
  remove at most $e$ edges.}

Adding a module $m$ with members $\{a,b,\dots\}$ to configuration $C$
can only remove edges by replacing existing edges $\{(a,v), (b,v),
\dots\}$ with a single edge $(m,v)$ (and similar for opposite-oriented
edges).  A configuration $C' \supset C$ has the same set of edges as
$C$, except that some edges are replaced by a smaller set of
edges.  Therefore, the module $m$ can only replace the same set of
edges $\{(a,v), (b,v), \dots\}$ or a smaller set where some of those
edges have been merged.

These properties allow us to calculate a lower bound on the best
possible configuration in under a given configuration in the search
tree.  Given a configuration $C$ with $m$ modules, we can compute the
minimum number of edges 
for any configuration that is a superset of $C$.

Let $c_1,c_2,\dots,c_n$ be the modules permitted to be added to
configuration $C$.  Each of these can be given a score $S(c_i)$ that
is the number of edges removed by adding that module: $S(c_i) = E(C) -
E(C \cup \{c_i\})$.  Since we can add at most $n-2-m$ modules to $C$,
the most edges we can possibly remove from $C$ is the sum of the
best $n-2-m$ scores of the modules $c_1,c_2,\dots,c_n$.


Therefore, for a given configuration $C$ we can calculate the score of
a hypothetical configuration $C' \supset C$ that has the minimum
possible edges.  (The configuration is only hypothetical because it
likely violates the hierarchy restriction on the modules.)
If the
objective cost of $C'$ is no better than the best-known solution, the
search can backtrack immediately from $C$.
An outline of the exhaustive search algorithm is given in Algorithm~\ref{alg:optsearch}.

\begin{algorithm}
  \SetAlgoLined
  \KwData{Current configuration, incumbent configuration, flat graph.}
  \KwResult{The optimal configuration.}
\vspace{2mm}
  \If{current configuration is better than incumbent}{
    record current configuration as incumbent
  }
  $\mathsf{bound} \leftarrow$ compute bound of current configuration\;
  \If{$\mathsf{bound}$ is not better than incumbent}{
    backtrack\;
  }
  \If{current configuration below module limit}{
    $\mathsf{modules} \leftarrow$ modules that can be added to current
    configuration\;
    Sort $\mathsf{modules}$ with most edge-reducing modules first\;
    \For{each module $m \in \mathsf{modules}$}{
      Recurse with $m$ added to current configuration\;
      Upon backtracking, forbid $m$ from $\mathsf{modules}$\;
    }
  }
  \Return incumbent configuration
\vspace{2mm}
  \caption{Outline of optimal search.}
  \label{alg:optsearch}
\end{algorithm}


We can further reduce the number of modules considered at each step of
the search.
We only need to consider \emph{binary} modules---as
produced by the merge operation for the beam search
(Section~\ref{sec:beamsearch})---
which
have exactly two members (which themselves may be either modules or
single vertices).  The search is still guaranteed to find an optimal
configuration.

\textit{Any optimal configuration has a counterpart that has only
  binary modules, and each such binary module reduces the number of
  edges.}
  Consider an optimal configuration $C$, constructed by the sequence
  of modules $<m_1,m_2,\dots,m_n>$.  Let $m_i$ be a non-binary
  module.  We can convert $m_i$ into a binary module by grouping
  together two of its members arbitrarily.  (This newly created
  sub-module will have no edges: if it was beneficial for this new
  sub-module to have edges, $C$ would not have been optimal.)  The new
  sub-module is inserted immediately before $m_i$ in the construction
  sequence.  If $m_i$ is still not binary, recurse.

  All new sub-modules created in this way reduce the number of edges
  during construction.  We know that adding the non-binary module
  $m_i$ reduces the number of edges.  From this we know that the
  members of $m_i$ share at least one common predecessor or common
  successor.  Therefore, any pair of the members of $m_i$ share a
  common predecessor or successor, and adding a module containing just
  that pair will reduce the number of edges.  (Any modules that become
  redundant during this construction can be removed.)

  The consequence of this property is that in the search algorithm,
  the modules to be added at each step need only be these binary
  modules instead of all possible modules.  This greatly reduces the
  search space.



\section{Experiments}
\label{sec:experiments}
\subsection{Heuristic Methods}
\begin{figure*}
\centering
\subfigure[Flat graph with $|V|=10, |E|=51$.  An orthogonal layout style was used (Y-Files implementation of topology-shape metrics approach) in an attempt obtain a layout and routing in which each individual edge path is traceable.  Such traceability is necessary to support discerning reachability or path-finding tasks.]{
\includegraphics[width=0.45\textwidth]{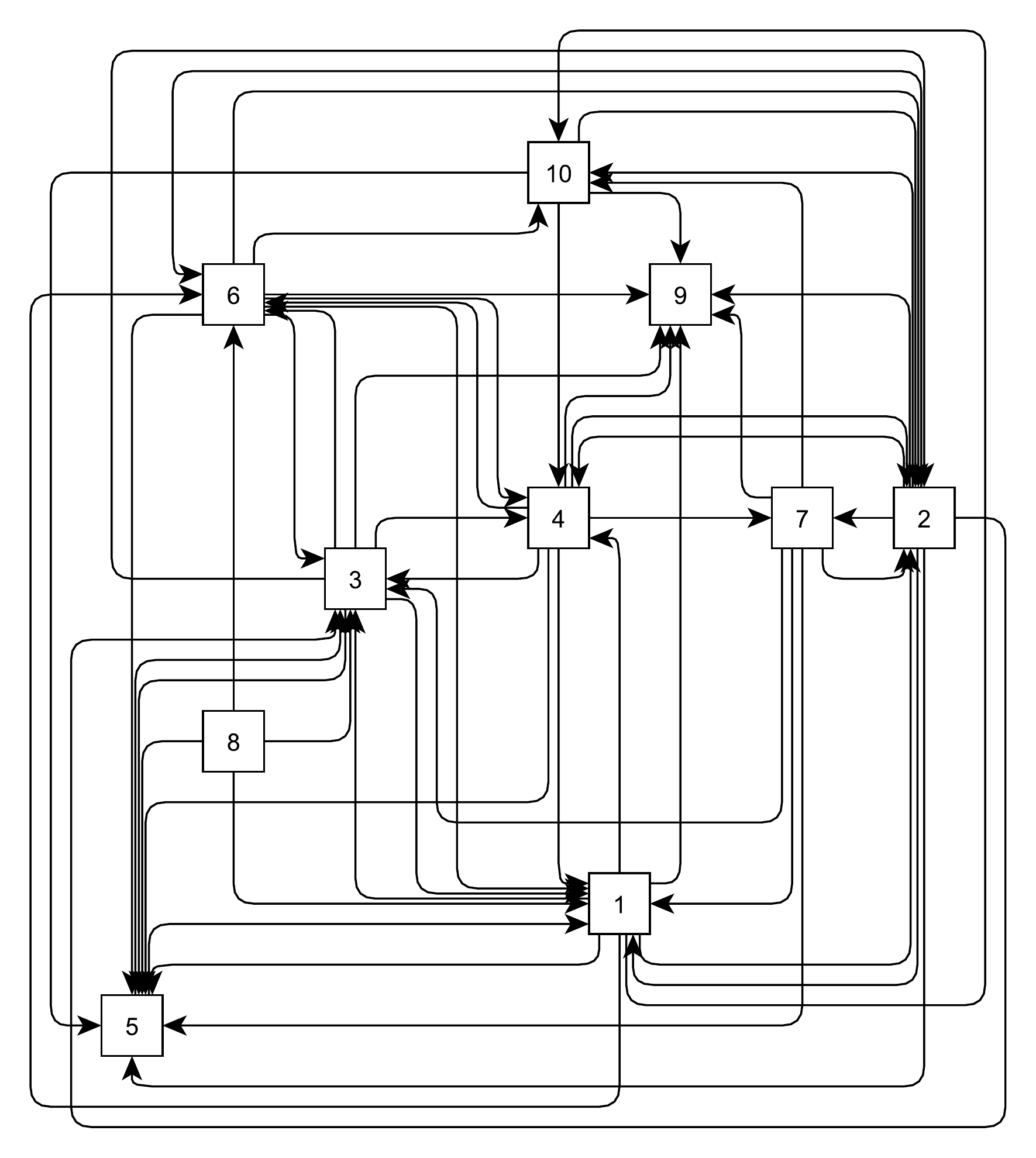}
\label{fig:graph-17-flat}
}~~~
\subfigure[Greedy Jaccard Metric decomposition.  
Decomposition has 5 modules, 23 edges and 21 crossings between edges and module boundaries.]{
\includegraphics[width=0.45\textwidth]{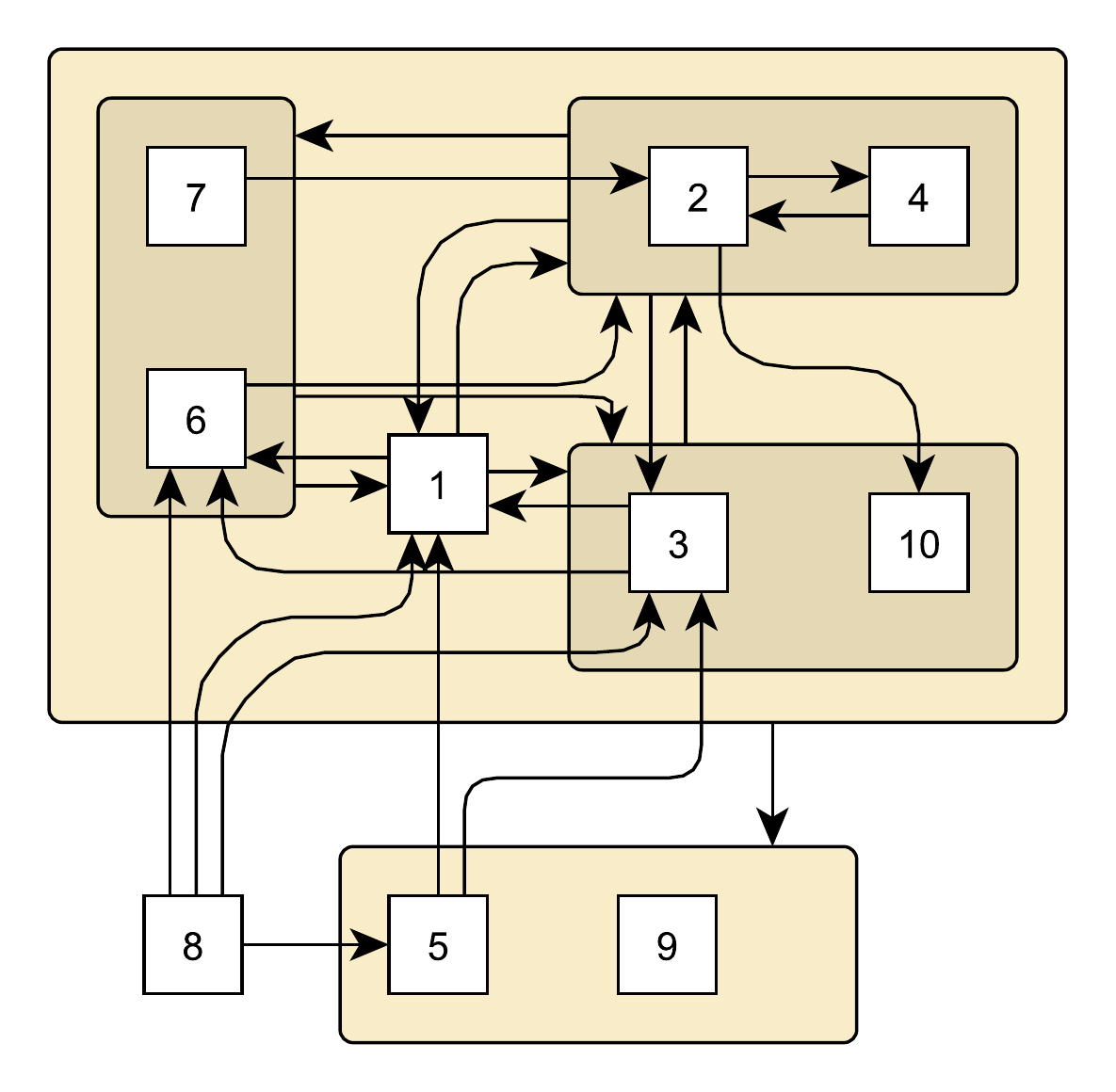}
\label{fig:graph-17-crossingweight1-greedy}
}\\
\subfigure[Beamsearch decomposition $k = 1$.  Decomposition has 6 modules, 19 edges and 17 crossings between edges and module boundaries.]{
\includegraphics[width=0.49\textwidth]{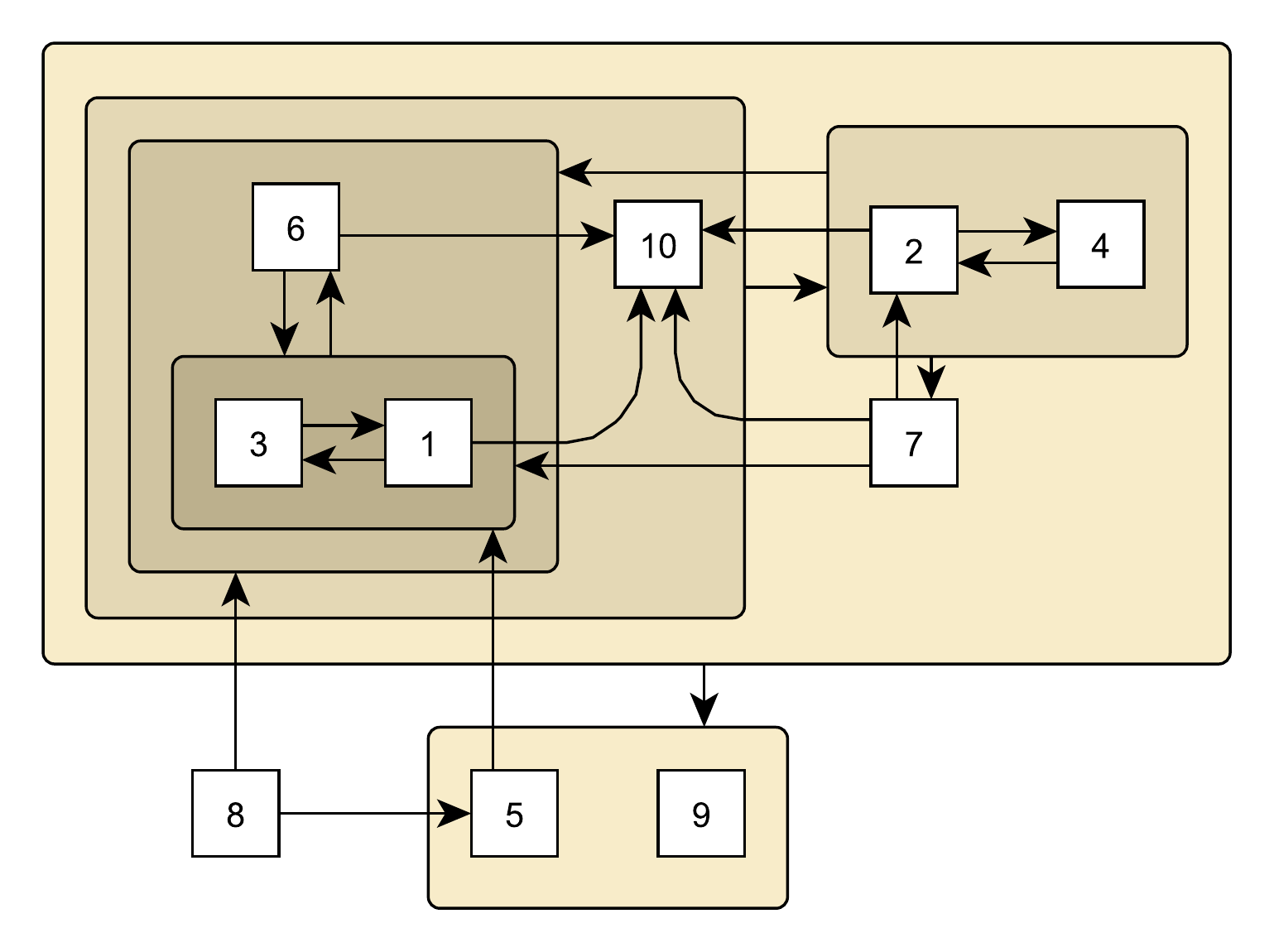}
\label{fig:graph-17-crossingweight1-beamsize1}
}~~~
\subfigure[Optimal decomposition obtained with beam search $k=10$ (optimality proven with method described in Section \ref{sec:optimalsearch}).  
Decomposition has 6 modules, 16 edges and 13 crossings between edges and module boundaries.]{
\includegraphics[width=0.49\textwidth]{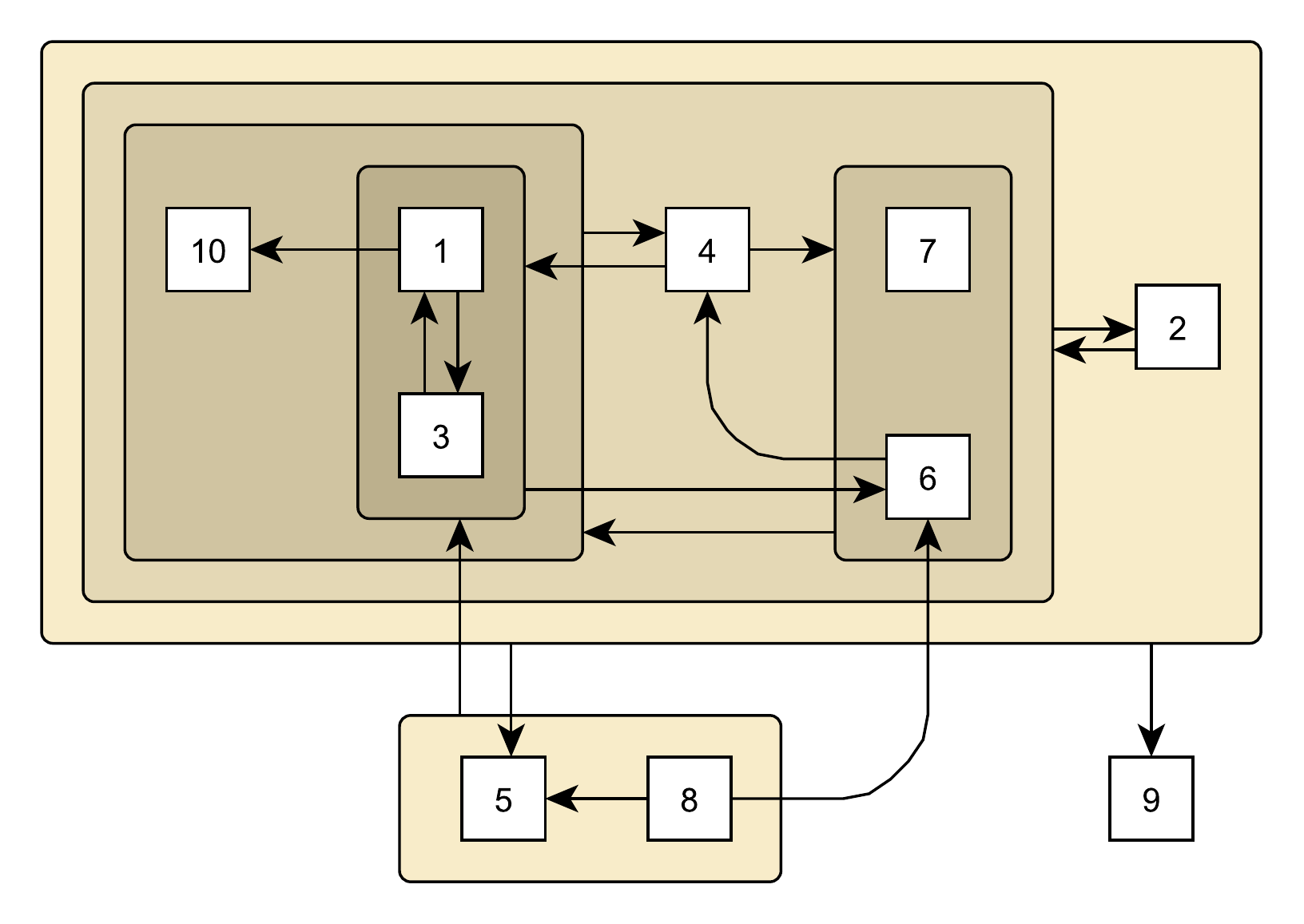}
\label{fig:graph-17-crossingweight1-optimal}
}
\caption{Flat and power graph renderings of a graph with 10 nodes and 51 edges.
It is interesting to note that the reduction in clutter from \subref{fig:graph-17-crossingweight1-greedy} to \subref{fig:graph-17-crossingweight1-beamsize1} is fairly obvious, however, the qualitative improvement from \subref{fig:graph-17-crossingweight1-beamsize1} to \subref{fig:graph-17-crossingweight1-optimal} is less dramatic.
Layouts were obtained using the yEd software (\url{http://yfiles.com}) using a combination of automatic layout and manual refinement with automatic edge routing.
\label{fig:visualcomparison}
}
\end{figure*}
Following Dwyer \emph{et al.}~\cite{dwyer2013EdgeCompression}, we generate random scale-free directed graphs with various numbers of nodes using the model of Bollob\'as \emph{et al.}~\cite{Bollobas2003DirectedScaleFreeGraphs}.  To generate a graph with $|V|$ nodes we control for density such that the number of edges is roughly proportional to $\frac{3}{2} |V|^\frac{3}{2}$.  For example, a graph with 10 nodes will have around 50 edges, while a graph with $|V|=100$ will have around 1500 edges.

All heuristics were implemented in $C\#$ and run on a modern tablet PC\footnote{Intel Ivy Bridge Core i7, up to 2.6GHz}.  The source-code is available under an open-source license\footnote{\url{http://dgmlposterview.codeplex.com}}.

Figure \ref{fig:comparisonchart} compares the decompositions obtained by applying our various heuristic powergraph decomposition methods to graphs generated as described above.  It is clear that the Greedy Jaccard Clustering method~\cite{royer2008unraveling} (JC) is easily beaten in terms of quality of the final solution, even by beam search with a beam size $k$ of 1 (i.e.\ best-first search with no backtracking at all).  Increasing $k$ to 10 does improve the results slightly even---in some cases---returning the optimal (e.g.\ see Figure \ref{fig:graph-17-crossingweight1-optimal}).  However, this is achieved at a ten-fold increase in running time (see Figure \ref{fig:heuristictimings}).  Increasing $k$ further to 100 only occasionally results in a slightly better solution.  Note that with $k$ small compared to the total number of configurations in the search space there is no guarantee that beam search will return the optimal.

Figures \ref{fig:graph-17-crossingweight1-greedy}, \subref{fig:graph-17-crossingweight1-beamsize1} and \subref{fig:graph-17-crossingweight1-optimal} show the renderings of the results obtained for a small instance with 10 nodes and 51 edges.  The quality of the renderings in terms of visual clutter, does seem to reflect the numbers seen in Figure \ref{fig:comparisonchart}.  There is a fairly marked visual improvement between JC (Figure \ref{fig:graph-17-crossingweight1-greedy}) and Beam Search with k=1 ($BS_1$) (Figure \ref{fig:graph-17-crossingweight1-beamsize1}), in particular $BS_1$ can be rendered without edge-edge intersections while for the JC decomposition it is impossible.  However, there is a much less obvious improvement from $BS_1$ to $BS_{10}$ (Figure \ref{fig:graph-17-crossingweight1-optimal}): the latter has three fewer edges and four fewer edge-module boundary crossings.  The result found by $BS_{10}$ is, in fact, the optimal as confirmed by running a full search.

In terms of running time, Beam Search with $k=1$ is easily the fastest method, as seen in Figure \ref{fig:heuristictimings}.  Generally, the running time (and also memory requirements) for beam search grow linearly in $k$.  For example, $BS_{10}$ has very close to twice the run time of $BS_{5}$.  The exception is that the running time of $BS_{1}$ is significantly less than $10\%$ that of $BS_{10}$ since we are able to apply all the merges entirely in place and no copying of configurations is required nor checks for structural equivalence of configurations. 

\begin{figure}
\centering
\includegraphics[width=\linewidth,trim=2cm 1cm 2cm 1cm]{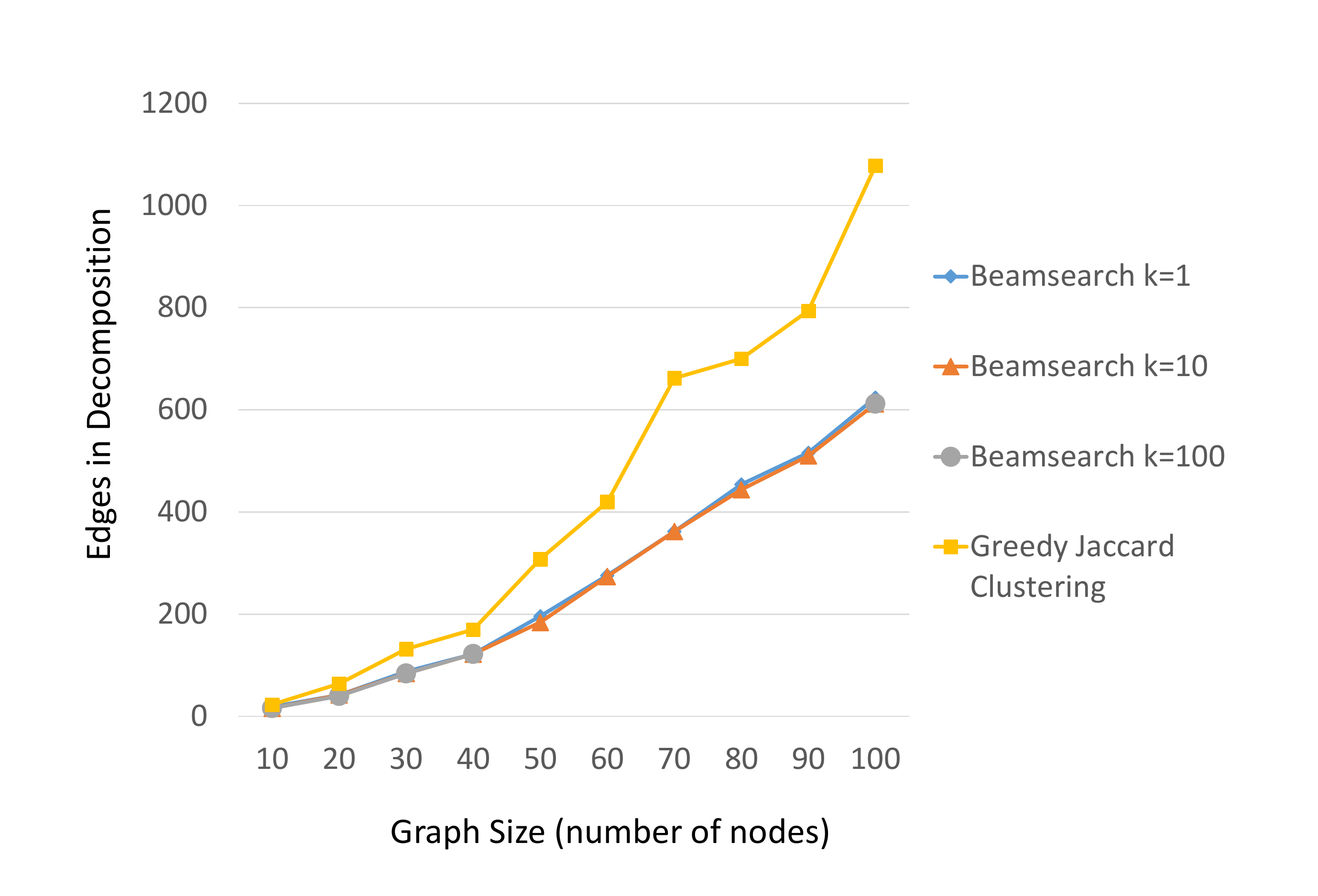}
\caption{Comparison of cost versus graph size for the various power graph decomposition heuristics.
\label{fig:comparisonchart}
}
\end{figure}

\begin{figure}
\centering
\includegraphics[width=\linewidth,trim=3cm 1cm 1cm 1cm]{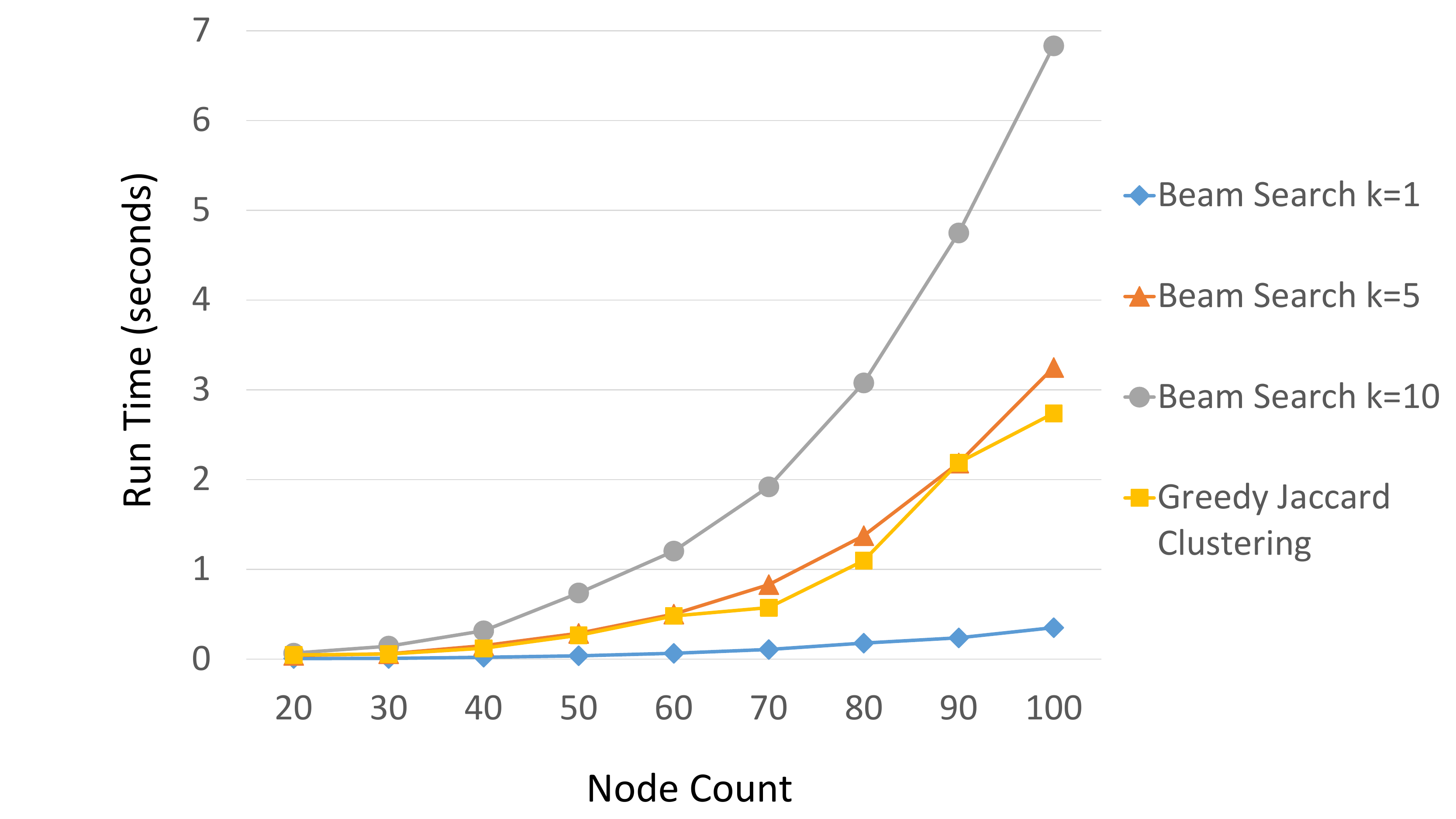}
\caption{Comparison of running times for Beam Search with various beam sizes and Greedy Jaccard Clustering heuristic.
\label{fig:heuristictimings}
}
\end{figure}

\subsection{Optimal Methods}
We have compared the efficiency of the exhaustive search methods
across a corpus of 10-vertex graphs.  The running times on a single
representative graph are given below.

\vspace{2mm}
\begin{tabular}{rr}
  ILP model & $> 160$ CPU hours \\
  CP model without redundant constraints & $> 72$ hours\\
  CP model with redundant constraints & $26.6$ hours \\
  Optimal Search & $8$ minutes \\
  Optimal Search (edge-minimization only) & $22$ seconds \\
\end{tabular}
\vspace{2mm}

Note that the ILP model only tries to minimize edges, while both
variants of the CP model try to break ties by minimising the number of
module boundary crossings and number of modules as in
\cite{dwyer2013EdgeCompression}.
For comparison with the CP model, we implemented two variants of the
optimal search: one only minimising edges, and the other breaking ties
in the same way as the CP model.  The tie-breaking variant is slower
because it explores a larger search space and must do more work at
each search step to compute boundary crossings.

All exhaustive methods were run on an Intel Core i7 2.67 GHz CPU.
The
ILP solver was run in parallel, with 8 cores running for 20 hours.
The ILP model was run with the commercial Gurobi\footnote{\url{http://www.gurobi.com/}}
solver, and the CP
model was run with the lazy-clause generation solver CPX\footnote{\url{http://www.opturion.com/cpx.html}}.
The ILP model and CP model without redundant constraints failed to
find the optimal solution.  The other three methods found the optimal
solution and proved it to be optimal.


\section{Conclusion}
\label{sec:conclusion}
This paper has presented a number of results, both practical and theoretical.  On the practical side, the best-first search method (or beam search $k=1$ or $BS_1$) is much faster than the previous best available heuristic, the Greedy Jaccard Clustering \cite{royer2008unraveling} (JC). 
For example, a dense graph with 100 nodes and 1500 edges can be decomposed in 0.35 seconds with $BS_1$ compared to 2.74 seconds for JC.  This is a significant enough performance improvement to enable new scenarios like continuous decomposition of live-streaming dynamic graph data.

Our experiments have shown that it also computes decompositions that are much closer to optimal, e.g.\ for the graph above $BS_1$ achieves a decomposition with 624 edges compared to 1078 for JC.  We have also shown the applicability of the beam search technique to obtain still more optimal results (e.g.\ again for the 100 node graph $BS_{10}$ finds a 612 edge decomposition) in reasonable time and memory.

For the optimal power graph decomposition problem, we have contributed the first known ILP model and significant improvements to a previous CP model.  While there have been many improvements to the compiler and solver technologies for such mathematical programming techniques this problem still defies simple, efficient declarative models.  By contrast, we have provided an explicit search method that is able to find and prove optimality in minutes compared to days.  Truly optimal solutions may not be necessary for practical power graph visualisation, but being able to efficiently compute them at least for small instances has proven invaluable in helping us develop better heuristic methods such as the beam search.  For example, development of the beam search was in large part motivated by discovering just how bad the Jaccard clustering method was by comparison to optimal decompositions.

On the theoretical side we give the first known NP-hardness proof for a power-graph analysis problem (the one module case) which is strong ground-work for a general NP-hardness proof, though it is also likely that variations of the problem (such as slightly different constraints or goal functions) may require separate proofs or may even be polynomial time.

\section{Further work}

As mentioned above more complexity analysis is required for the general power graph problem and its variants.

A related technique for simplifying dense graphs is confluent graph drawing \cite{dickerson2005ConfluentDrawings}.  Like power-graph decomposition this also involves identifying bipartite components.  Current confluent graph drawing algorithms have tended to focus on the computability of planar confluent graphs.  There has been little focus on developing methods that can do something reasonable when a planar drawing is not possible.  We think an optimisation-based approach related to the techniques described in this paper may have some success in this regard.

Finally, we need better layout methods for power-graph decompositions and clustered graphs generally.  The examples in this paper were initially arranged with the Y-Files organic layout method but required significant manipulation to achieve pleasing alignment and to minimise crossings between edges.

\bibliographystyle{abbrv}
\bibliography{PowerGraphAnalysis}

\appendix
\clearpage
\noindent \textbf{\textsc{\textsf{\Large Appendix}}}
\vspace{2mm}

\noindent For the published version of this paper these additional details will be 
provided as an on-line tech report.
\section{ILP model}
The following is a complete description of the ILP model used to
minimise the number of power edges (or equivalently to maximise the
number of edges saved by adding modules).
\noindent\underline{\textbf{Input and Parameters}}\vspace{1mm}

\noindent $n$  is the number of vertices of the input graph $G$.\\
$V = \{0,1,\dots,n-1\}$ represents the vertices of the input
  graph $G$.\\
$e(u,v)$  represents the edges of the input graph $G$ as an
  incidence matrix. That is, $e(u,v) = 1$ if $(u,v)$ is an edge of $G$
  and $e(u,v)=0$ otherwise.\\
$m$ is the number of modules with at least two elements (we
  consider each singleton vertex to belong to its own module).\\
$M= \{ 0,1,\dots,n+m-1\}$ represents the set of all modules. 

\noindent \underline{\textbf{Integer decision variables}}\vspace{1mm}

\noindent $sav[m_1,m_2]$ the number of edges that may be
removed from $G$ (to then be replaced by a single edge) if the modules
$m_!$ and $m_2$ are added.

\noindent \underline{\textbf{Binary decision variables}}\vspace{1mm}

\noindent $mod[v,m]$ takes the value $1$ if and only if vertex $v$
  belongs to the module $m$.\\
$ind[v,m]$ takes the value $1$ if and only if $(v,u)$ is an
  edge, for all $u$ in the module $m$.\\
$bic[m_1,m_2]$ takes the value $1$ if and only if, for
  every vertex $v\in m_1$ and every vertex $u\in m_2$, the pair
  $(u,v)$ is an edge in $G$.\\
$dis[m_1,m_2]$ takes the value $1$ if and only if the modules $m_1$
  and $m_2$ are disjoint sets of vertices.\\
$sub[m_1,m_2]$ takes the value $1$ if and only if the module
  $m_1$ is a proper subset of the module $m_2$.\\
$mInd[v,m_1,m_2]$ takes value $1$ if and only if $v\in m_1$ and $ind[v,m_2]=1$.\\
$vMod[v,m_1,m_2]$ takes value $1$ if and only if $v\in m_1$ and $v\in m_2$.\\
$sVer[v_1,v_2,m_1,m_2]$ takes value $1$ if and only if
  $(v_1,v_2)$ is an edge with $v_1\in m_1$ and $v_2\in m_2$ and the
  edge $(v_1,v_2)$ can be removed if $m_1$ and $m_2$ are added.\\
$sMod[m_1,m_2]$ takes value $1$ if and only if $sav[m_1,m_2] > 0$.

\noindent \underline{\textbf{Objective}}\vspace{1mm}

\noindent Maximise \[\sum\{sav[m_1,m_2] - sMod[m_1,m_2] \mid m1,m2\in M,\ m_1\neq m_2\}.\]

\noindent \underline{\textbf{Constraints}}\vspace{1mm}
\begin{enumerate}
\itemsep0em 
\item $\sum\limits_{u\in V}(e(v,u)-1)mod[u,m_1] \geq n(ind[v,m]-1)$,  $v\in V$, $m_1\in M$.
\item $\sum\limits_{u\in V}(e(v,u)-1)mod[u,m_1] \leq ind[v,m_1]-1$,  $v\in V$, $m_1\in M$.
\item $mInd[v,m_1,m_2] \leq mod[v,m_1]$,  $v\in V$, $m_1\neq m_2\in M$. 
\item $mInd[v,m_1,m_2] \leq ind[v,m_1]$,  $v\in V$, $m_1\neq m_2\in M$.
\item $mInd[v,m_1,m_2] \geq mod[v,m_1] + ind[v,m_1] -1$,  $v\in V$, $m_1\neq m_2\in M$. 
\item $\sum\limits_{v\in V} (mInd[v,m_1,m_2]-mod[v,m_1]) \geq n(bic[m_1,m_2]-1)$, $m_1\neq m_2\in M$.
\item $\sum\limits_{v\in V} (mInd[v,m_1,m_2]-mod[v,m_1]) \leq bic[m_1,m_2]-1$,  $m_1\neq m_2\in M$.
\item $vMod[v,m_1,m_2] \leq mod[v,m_1]$,  $v\in V$, $m_1\neq m_2\in M$. 
\item $vMod[v,m_1,m_2] \leq mod[v,m_2]$,  $v\in V$, $m_1\neq m_2\in M$. 
\item $vMod[v,m_1,m_2] \geq mod[v,m_1] + mod[v,m_2] -1$,  $v\in V$, $m_1\neq m_2\in M$. 
\item $\sum\limits_{v\in V}vMod[v,m_1,m_2] \leq n(1-dis[m_1,m_2])$, $m_1\neq m_2\in M$.
\item $\sum\limits_{v\in V}vMod[v,m_1,m_2] \geq 1-dis[m_1,m_2]$, $m_1\neq m_2\in M$.
\item $dis[m_1,m_2] = dis[m_2,m_1]$, $m_1\neq m_2\in M$.
\item $\sum\limits_{v\in V}(vMod[v,m_1,m_2] - mod[v,m_1])\geq n(sub[m_1,m_2]-1)$,  $m_1\neq m_2\in M$.
\item $\sum\limits_{v\in V}(vMod[v,m_1,m_2] - mod[v,m_1])\leq sub[m_1,m_2]-1$,  $m_1\neq m_2\in M$.
\item $\sum\limits_{v\in V}vMod[v,m_1,m_2]\leq (\sum\limits_{v\in V}mod[v,m_2]) - sub[m_1,m_2]$,  $m_1\neq m_2\in M$.
\item $dis[m_1,m_2] + sub[m_1,m_2] + sub[m_2,m_1] = 1$, $m_1\neq m_2\in M$.
\item $sVer[v_1,v_2,m_1,m_2] \leq e(v_1,v_2)$,  $v_1,v_2\in V$, $m_1,m_2\in M$.
\item $sVer[v_1,v_2,m_1,m_2] \leq mod[v_1,m_1]$,  $v_1,v_2\in V$, $m_1,m_2\in M$.
\item $sVer[v_1,v_2,m_1,m_2] \leq mod[v_2,m_2]$,  $v_1,v_2\in V$, $m_1,m_2\in M$.
\item $sVer[v_1,v_2,m_1,m_2] \leq bic[m_1,m_2]$,  $v_1,v_2\in V$, $m_1,m_2\in M$.
\item $\sum\{sVer[v_1,v_2,m_1,m_2] \mid m_1,m_2\in M, m_1\neq m_2 \}\leq 1$,  $v_1,v_2\in V$.
\item $sav[m_1,m_2]\leq \sum\{sVer[v_1,v_2,m_1,m_2] \mid v_1,v_2\in V, v_1\neq v_2 \}$,  $m_1\neq m_2\in M$.
\item $sMod[m_1,m_2] \leq sav[m_1,m_2]$,  $m_1\neq m_2\in M$.
\item $sav[m_1,m_2] \leq n^2 sMod[m_1,m_2]$, $m_1\neq m_2\in M$.
\item $\sum\limits_{v\in V} mod[v,m_1] =1$,  $m_1\in V$.
\item $mod[m_1,m_1] = 1$,  $m_1\in V$.
\end{enumerate}

Constraints 1 and 2 define the variables
$ind[v,m]$. Constraints 3 -- 5 define
$mInd[v,m_1,m_2]$. Constraints 6 and 7 define
$bic[m_1,m_2]$. Constraints 8 -- 10 define
$vMod[v,m_1,m_2]$. Constraints 11 -- 13 define
$dis[m_1,m_2]$. Constraints 14 --16 define
$sub[m_1,m_2]$. Constraint 17 says that for any two distinct
modules $m_1$ and $m_2$, either $m_1$ and $m_2$ are disjoint, or one
is a proper subset of the other. Constraint 18 -- 21 defines the
variables $sVer[v_1,v_2,m_1,m_2]$. Constraint 22 says that no
edge can be counted twice in the saving calculation. Constraint 23
defines the variables $sav[m_1,m_2]$. Constraints 24 and 25
defines $sMod[m_1,m_2]$. Constraints 26 and 27 force each
vertex to be a singleton module.

\end{document}